\titlespacing*{\paragraph}{0pt}{3ex plus 1.25ex minus .2ex}{1em}
\theoremstyle{plain}
\newtheorem{theorem}{Theorem}
\newtheorem{lemma}[theorem]{Lemma}
\newtheorem{property}[theorem]{Property}
\newtheorem*{conjecture*}{Conjecture}
\theoremstyle{definition}
\newtheorem{definition}[theorem]{Definition}
\theoremstyle{remark}
\newtheorem*{remark*}{Remark}
\newcommand{\dclocal}{{\upshape\sffamily DC\mbox{-}LOCAL}}
\newcommand{\local}{{\upshape\sffamily LOCAL}}
\newcommand{\lcl}{{\upshape\sffamily LCL}}
\newcommand{\tvirt}{\ensuremath{\mathsf {T_{virt}}}}
\newenvironment{myabstract}
{\list{}{\listparindent 1.5em%
        \itemindent    \listparindent
        \leftmargin    1cm
        \rightmargin   1cm
        \parsep        0pt}%
    \item\relax}
{\endlist}
\newenvironment{mycover}
{\list{}{\listparindent 0pt
        \itemindent    \listparindent
        \leftmargin    1cm
        \rightmargin   1cm
        \parsep        0pt}%
    \raggedright
    \item\relax}
{\endlist}
\newcommand{\myemail}[1]{\,$\cdot$\, {\small #1}}
\newcommand{\myaff}[1]{\,$\cdot$\, {\small #1}\par\smallskip}
\begin{document}

\begin{mycover}
    {\huge\bfseries Locality of not-so-weak coloring \par}
    \bigskip
    \bigskip
    \bigskip

    \textbf{Alkida Balliu}
    \myemail{alkida.balliu@aalto.fi}
    \myaff{Aalto University}

    \textbf{Juho Hirvonen}
    \myemail{juho.hirvonen@aalto.fi}
    \myaff{Aalto University}

    \textbf{Christoph Lenzen}
    \myemail{clenzen@mpi-inf.mpg.de}
    \myaff{Max Planck Institute for Informatics}

    \textbf{Dennis Olivetti}
    \myemail{dennis.olivetti@aalto.fi}
    \myaff{Aalto University}

    \textbf{Jukka Suomela}
    \myemail{jukka.suomela@aalto.fi}
    \myaff{Aalto University}
\end{mycover}
\bigskip

\begin{myabstract}
\noindent\textbf{Abstract.}
Many graph problems are locally checkable: a solution is globally feasible if it looks valid in all constant-radius neighborhoods. This idea is formalized in the concept of \emph{locally checkable labelings} (\lcl{}s), introduced by Naor and Stockmeyer (1995). Recently, Chang et al.\ (2016) showed that in bounded-degree graphs, every \lcl{} problem belongs to one of the following classes:
\begin{itemize}[noitemsep]
    \item \emph{``Easy'':} solvable in $O(\log^* n)$ rounds with both deterministic and randomized distributed algorithms.
    \item \emph{``Hard'':} requires at least $\Omega(\log n)$ rounds with deterministic and $\Omega(\log \log n)$ rounds with randomized distributed algorithms.
\end{itemize}

Hence for any parameterized \lcl{} problem, when we move from local problems towards global problems, there is some point at which complexity suddenly jumps from easy to hard. For example, for vertex coloring in $d$-regular graphs it is now known that this jump is at precisely $d$ colors: coloring with $d+1$ colors is easy, while coloring with $d$ colors is hard.

However, it is currently poorly understood where this jump takes place when one looks at \emph{defective} colorings. To study this question, we define \emph{$k$-partial $c$-coloring} as follows: nodes are labeled with numbers between $1$ and $c$, and every node is incident to at least $k$ properly colored edges.

It is known that $1$-partial $2$-coloring (a.k.a.\ weak $2$-coloring) is easy for any $d \ge 1$. As our main result, we show that $k$-partial $2$-coloring becomes hard as soon as $k \ge 2$, no matter how large a $d$ we have.

We also show that this is fundamentally different from $k$-partial $3$-coloring: no matter which $k \ge 3$ we choose, the problem is always hard for $d = k$ but it becomes easy when $d \gg k$. The same was known previously for partial $c$-coloring with $c \ge 4$, but the case of $c < 4$ was open.
\end{myabstract}

\newpage

\section{Introduction}

\begin{table}
\raggedleft
\renewcommand{\arraystretch}{1.4}
\definecolor{cct}{HTML}{ccebc5}
\definecolor{cce}{HTML}{b3cde3}
\definecolor{cch}{HTML}{fbb4ae}
\colorlet{ct}{cct!45!white}
\colorlet{ce}{cce!45!white}
\colorlet{ch}{cch!35!white}
\colorlet{cu}{black!3!white}
\colorlet{cE}{cce!70!black}
\colorlet{cH}{cch!70!black}
\newcommand{\Sp}{\,}
\newcommand{\cd}[1]{\multicolumn{8}{@{}l@{}}{\emph{#1}}}
\newcommand{\ck}[1]{\multicolumn{1}{@{}r}{$#1$:$\ $}}
\newcommand{\cc}[1]{\multicolumn{2}{c}{$c=#1$}}
\newcommand{\ct}[1]{\multicolumn{2}{c}{\cellcolor{ct}$#1$}}
\newcommand{\ch}[1]{\multicolumn{2}{c}{\cellcolor{ch}$#1$}}
\newcommand{\cww}[1]{%
    \multicolumn{1}{>{\columncolor{cu}[\tabcolsep][0pt]}r@{}}{$\Sp#1$} &
    \multicolumn{1}{@{}>{\columncolor{cu}[0pt][\tabcolsep]}l}{${} \ldots \infty\Sp$}
}
\newcommand{\cee}[2]{%
    \multicolumn{1}{>{\columncolor{ce}[\tabcolsep][0pt]}r@{}}{$\Sp#1$} &
    \multicolumn{1}{@{}>{\columncolor{ce}[0pt][\tabcolsep]}l}{${} \ldots #2\Sp$}
}
\newcommand{\nnH}[1]{\color{cH}\fbox{\color{black}$#1$}}
\newcommand{\oaE}[1]{#1\hspace{\fboxsep}\hspace{\fboxrule}}
\newcommand{\obE}[1]{\hspace{\fboxsep}\hspace{\fboxrule}#1}
\newcommand{\nnE}[1]{\color{cE}\fbox{\color{black}$#1$}}
\newcommand{\Cee}[2]{\cee{\oaE{#1}}{\obE{#2}}}
\newcommand{\CeE}[2]{\cee{\oaE{#1}}{\nnE{#2}}}
\newcommand{\CEe}[2]{\cee{\nnE{#1}}{\obE{#2}}}
\newcommand{\CEE}[2]{\cee{\nnE{#1}}{\nnE{#2}}}
\newcommand{\cH}[1]{\ch{\nnH{#1}}}
\begin{tabular}{cccccccccccccccccc}
&\cd{(a) Before this work:} & \  &\cd{(b) After this work:} \\ \addlinespace
&\cc{2}&\cc{3}&\cc{4}&\cc{5}&&\cc{2}&\cc{3}&\cc{4}&\cc{5}\\
\ck {k=1} & \ct {1}      & \ct  {1}     &\ct{1}      & \ct{1} &
          & \ct {1}      & \ct  {1}     &\ct{1}      & \ct{1} \\
\ck   {2} & \cww{3}      & \ct  {2}     &\ct{2}      & \ct{2} &
          & \cH {\infty} & \ct  {2}     &\ct{2}      & \ct{2} \\
\ck   {3} & \cww{4}      & \cww {4}     &\ct{3}      & \ct{3} &
          & \cH {\infty} & \CeE {4}{5}  &\ct{3}      & \ct{3} \\
\ck   {4} & \cww{5}      & \cww {5}     &\cee{5}{7}  & \ct{4} &
          & \cH {\infty} & \CeE {5}{8}  &\CeE{5}{6}  & \ct{4} \\
\ck   {5} & \cww{6}      & \cww {6}     &\cee{6}{9}  & \cee{6}{9} &
          & \cH {\infty} & \CEE {7}{11} &\Cee{6}{9}  & \CeE{6}{7} \\
\ck   {6} & \cww{7}      & \cww {7}     &\cee{7}{11} & \cee{7}{11} &
          & \cH {\infty} & \CEE {8}{14} &\Cee{7}{11} & \Cee{7}{11} \\
\ck   {7} & \cww{8}      & \cww {8}     &\cee{8}{13} & \cee{8}{13} &
          & \cH {\infty} & \CEE{10}{17} &\CEe{9}{13} & \Cee{8}{13}
\end{tabular}
\caption{An overview of $k$-partial $c$-coloring in $d$-regular graphs: for each $k$ and $c$, the table shows what is the smallest $d$ such that the problem is easy. For example, ``$4\ldots5$'' means that for these parameters the problem is known to be easy in $5$-regular graphs, while the case of $4$-regular graphs is unknown. The new results are highlighted with a frame. Our main contributions are the new lower bounds for $c=2$ (Theorem~\ref{thm:lower_bound}) and upper bounds for $c=3$ (Theorem~\ref{thm:upper}), which were previously completely open. We also obtain stronger lower bounds for e.g.\ $c = 3$, $k \ge 5$ (Theorem~\ref{thm:lb-proper-col}) and stronger upper bounds for $c=k\ge 4$ (Theorem~\ref{thm:upper}).}\label{tab:overview}
\end{table}

There is a broad family of graph problems---so-called \emph{locally checkable labelings} or \lcl{}s \cite{Naor1995}---that exhibits the following dichotomy \cite{chang16exponential}: either the problem can be solved in $O(\log^* n)$ rounds with deterministic distributed algorithms, or any such algorithm requires at least $\Omega(\log n)$ rounds.

Hence, for any parameterized \lcl{} problem there is a sudden jump in complexity, from $O(\log^* n)$, which is a very slowly-growing function of $n$, to $\Omega(\log n)$, which can be already as much as the diameter of the network. We will call these two cases \emph{``easy''} and \emph{``hard''} from now on.

If we look at $d$-regular graphs for constant $d = O(1)$, then by prior work the following thresholds are known \cite{panconesi01simple,cole86deterministic,Brandt2016,chang16exponential}:
\begin{itemize}[noitemsep]
    \item Proper vertex coloring with $c$ colors: easy for $c \ge d+1$, hard for $c \le d$.
    \item Proper edge coloring with $c$ colors: easy for $c \ge 2d-1$, hard for $c \le 2d-2$.
\end{itemize}
Here the easy cases are exactly those cases that can be solved with a greedy algorithm that picks the colors of the nodes or edges one by one; a straightforward parallelization of this idea then gives an $O(\log^* n)$-round distributed algorithm.

In this work, we study colorings that are not necessarily proper:

\begin{definition}
Let $G=(V,E)$ be a graph. Mapping $f\colon V \to \{1,2,\dotsc,c\}$ is a $k$-partial $c$-coloring if for each node $v\in V$, there are at least $k$ neighbors $u$ of $v$ with $f(u) \ne f(v)$.
\end{definition}

By prior work on defective colorings, it is known that e.g.\ $k$-partial $4$-coloring is hard if $d=k\ge 4$ and easy if $d \gg k$. However, very little was known about partial $2$-coloring and $3$-coloring. In this work we complete the picture and show that the case of $c=2$ is very different from the case $c \ge 3$:
\begin{itemize}[noitemsep]
    \item $k$-partial $2$-coloring for any $k \ge 2$ is always hard, no matter how large a $d$ we have,
    \item $k$-partial $3$-coloring for any $k \ge 3$ is hard for $d = k$ but it becomes easy when $d \gg k$.
\end{itemize}
We summarize our contributions in Table~\ref{tab:overview}.

\section{Preliminaries and related work}

\subsection{\local{} model of computing}

We work in the usual \local{} model of distributed computing \cite{Linial1992,Peleg2000}. Each node of the input graph $G = (V,E)$ is a computer and each edge is a communication link. Computation proceeds in rounds: in one round each node can exchange a message (of any size) with each of its neighbors. Initially each node knows only $n = |V|$, and when a node stops, it has to produce its own part of the output---in our case, its own color from $\{1,2,\dotsc,c\}$. We say that an algorithm runs in time $T$ if after $T$ rounds all nodes stop and announce their local outputs.

When we study \emph{deterministic} algorithms, we assume that each node is labeled with a unique identifier from $\{1,2,\dotsc,n^{O(1)}\}$. When we study \emph{randomized} algorithms, we assume that each node has an unlimited source of random bits. For a randomized algorithm, we require that it solves the problem correctly \emph{with high probability,} i.e., with probability at least $1-n^{-c}$ for an arbitrary, but predetermined constant $c>0$.

Note that if a problem is solvable in $T$ rounds in the \local{} model, it also means that each node can produce its own part of the solution based on the information that is available in its radius-$T$ neighborhood.

\subsection{\lcl{} problems and gap theorems}

\lcl{} problems were introduced by \citet{Naor1995} in 1995. In an \lcl{} problem, the input is a graph $G = (V,E)$ of maximum degree $\Delta = O(1)$, possibly labeled with some node labels from a constant-size set $X$. The task is to find a labeling $f\colon V \to Y$, for some constant-size set $Y$, that satisfies some local constraints---a labeling is globally feasible if it is feasible in all constant-radius neighborhoods.

For our purposes it is enough to note that $k$-partial $c$-coloring in $d$-regular graphs is an \lcl{} problem, for any choice of constants $k, c, d = O(1)$. Hence also everything that we know about \lcl{}s applies here.

In the past four years, we have seen a lot of progress in understanding the computational complexity of \lcl{} problems in the \local{} model \cite{Balliu2018stoc,Balliu2018disc,Brandt2016,chang16exponential,chang17hierarchy,fischer17sublogarithmic,ghaffari17distributed,Ghaffari2018,Ghaffari2018a,chang18complexity,Brandt2017}. For us, the most relevant result is the gap theorem by \citet{chang16exponential}. They show that every \lcl{} problems belongs to one of the following classes, which we will here informally call ``easy'' and ``hard''
\begin{description}[noitemsep]
    \item[``Easy'':] solvable in $O(\log^* n)$ rounds with both deterministic and randomized algorithms.
    \item[``Hard'':] requires $\Omega(\log n)$ rounds with deterministic algorithms and $\Omega(\log \log n)$ rounds with randomized algorithms.
\end{description}

In this work, our main goal is to understand for what values of $k, c, d$ the problem of finding $k$-partial $c$-coloring in $d$-regular graphs is ``hard'' and when it is ``easy''. While we will focus on the case of $d$-regular graphs, most of the results directly generalize to the case of graphs of minimum degree $d$ (and maximum degree some constant $\Delta$).

\subsection{Prior work related to partial colorings}

\paragraph{Notes on terminology.}
In $d$-regular graphs, a $k$-partial $c$-coloring is exactly the same thing as a \emph{$(d-k)$-defective $c$-coloring} \cite[Sect.~6]{Barenboim2013}. While defective colorings are more commonly discussed in prior work, for our purposes the concept of a partial coloring is much more convenient, as we will often fix $c$ and $k$ and see what happens when $d$ increases.

Our definition is in essence equal to \emph{$k$-partially proper colorings} used by \citet{Kuhn2009}; for brevity, we call these partial colorings.

\paragraph{Weak coloring, \boldmath $k=1$.}
In graphs without isolated nodes, a $1$-partial $c$-coloring is identical to a \emph{weak $c$-coloring} \cite{Naor1995}. Weak $2$-coloring can be solved in $O(\log^* n)$ rounds: find a maximal independent set $X \subseteq V$ using e.g.\ \cite{panconesi01simple,cole86deterministic}; then color all nodes of $X$ with color $1$ and all other nodes with color $2$. Naturally, this also gives a solution for weak $c$-coloring for any $c \ge 2$. Furthermore, this upper bound is tight: a weak $2$-coloring breaks symmetry everywhere in a regular grid, and the usual lower bounds \cite{Linial1992,Naor1991,Naor1995} apply.

Weak coloring corresponds to the first row of Table~\ref{tab:overview}.

\paragraph{Partial coloring for \boldmath $k<c$.}
Above we have seen that we can find a $1$-partial $2$-coloring by simply finding a maximal independent set (MIS). The same idea can be generalized to {$(c-1)$}-partial $c$-coloring for $k = c-1$: Find an MIS $X_1$, label $X_1$ with color $1$, and remove $X_1$. Find an MIS $X_2$, label $X_2$ with color $2$, and remove $X_2$, etc. We continue this for $c-1$ steps and finally label all remaining nodes with color $c$.

The region where this simple (folklore?) strategy works is indicated with green color in Table~\ref{tab:overview}.

\paragraph{Proper vertex coloring, \boldmath $k=d$.}
In $d$-regular graphs, a $d$-partial $c$-coloring is a proper $c$-coloring. Recall that proper coloring with $d+1$ colors is easy \cite{panconesi01simple,cole86deterministic}, while proper coloring with $d$ colors is known to be hard \cite{Brandt2016,chang16exponential}.

Hardness of proper $d$-coloring implies the lower bounds in the blue and gray regions of Table~\ref{tab:overview}a.

\paragraph{Partial coloring for \boldmath $c \ge 4$.}
\citet{barenboim14distributed} gave an algorithm that computes a $\lfloor d/p \rfloor$-defective $p^2$-coloring in time $O(\log^* n)$, which is essentially a defective variant of Linial's $O(\Delta^2)$-coloring algorithm \cite{Linial1992}. This algorithm requires at least $4$ colors, and for the case $c=4$ it translates to a $\lceil d/2 \rceil$-partial $4$-coloring. For example, $4$-partial $4$-coloring is therefore easy in $7$-regular graphs, and $5$-partial $4$-coloring is easy in $9$-regular graphs.

This algorithm gives the upper bounds in the blue region of Table~\ref{tab:overview}a.

\paragraph{Partial coloring for \boldmath $c \le 3$.}
To our knowledge, no $O(\log^* n)$-time algorithms are known for $k$-partial $c$-coloring for $c \le 3$, $k \ge c$. In particular, it is not known if the problem becomes easy in $d$-regular graphs for sufficiently large values of $d \gg k$.

This unknown region is indicated with a gray shading in Table~\ref{tab:overview}a.

\paragraph{Algorithms based on Lov\'asz local lemma.}
\citet{Chung2017}, \citet{fischer17sublogarithmic}, and \citet[full version]{Ghaffari2018} present algorithms for defective coloring (and hence for partial coloring) that are based on the following idea: formulate a defective coloring as an instance of the Lov\'asz local lemma (LLL), and then apply efficient distributed algorithms for LLL.

Unfortunately, this approach is unlikely to lead to an $O(\log^* n)$-time algorithm; LLL is known to be a hard problem for a wide range of parameters \cite{Brandt2016}.

\paragraph{Other algorithms.}
\citet{Bonamy2018} shows that there is an $O(\log n)$-round algorithm for trees that finds an MIS $X$ such that every component induced by non-MIS nodes is of size one or two. This can be interpreted as an algorithm for partial $2$-coloring.

However, this approach cannot lead to an $O(\log^* n)$-time algorithm, either: if we color the MIS-nodes with color $1$ and the non-MIS nodes with colors $2$ and $3$, we obtain a proper $3$-coloring, and finding a $3$-coloring in $3$-regular trees is known to be a hard problem \cite{Brandt2016}.

\section{Our contributions}

To recap, by prior work, we have a good qualitative understanding of $k$-partial $c$-coloring for $c \ge 4$:
\begin{itemize}[noitemsep]
    \item $k < c$: easy for all $d \ge k$.
    \item $k \ge c$: hard for $d = k$ but easy for $d \gg k$.
\end{itemize}
We complete the picture for $c \le 3$. For $c = 3$, we have precisely the same situation as above:
\begin{itemize}[noitemsep]
    \item $k < c$: easy for all $d \ge k$.
    \item $k \ge c$: hard for $d = k$ but easy for $d \gg k$.
\end{itemize}
However, the case of $c = 2$ is fundamentally different:
\begin{itemize}[noitemsep]
    \item $k < c$: easy for all $d \ge k$.
    \item $k \ge c$: hard for all values of $d$.
\end{itemize}

\subsection{Corollary: locally optimal cuts}

Any partial $2$-coloring can be interpreted as a \emph{cut}; the properly colored edges are \emph{cut edges}, and the \emph{size} of the cut is  the number of cut edges. Let us look at the problem of maximizing the size of a cut with a simple greedy strategy: start with any cut and change the color of a node if it increases the size of the cut. The process will converge to a \emph{locally optimal cut}, in which changing the color of any single node does not help.

Now a locally optimal cut in $d$-regular graphs is precisely the same thing as a $\lceil d/2 \rceil$-partial $2$-coloring. For example, in $3$-regular graphs, any $2$-partial $2$-coloring is also a locally optimal cut, and vice versa.

Locally optimal cuts are easy to find in a centralized, sequential setting. However, previously it was not known if locally optimal cuts can be found efficiently in a distributed setting. As a corollary of our work, we now know that this is a hard problem.

\subsection{Key techniques}
\paragraph{Upper bound for 3-coloring.}
Prior algorithms for e.g.\ partial $4$-coloring are based on the idea of organizing nodes in layers and doing two sweeps \cite{barenboim14distributed}: top to bottom, using colors from palette $A = \{1,2\}$, and bottom to top using colors from palette $B = \{1,2\}$. This way we eventually have a $4$-coloring with colors from $A \times B = \{(1,1), (1,2), (2,1), (2,2)\}$. This idea generalizes easily to e.g.\ $6, 8, 9, \dotsc$ colors, but it is not possible to use this idea to find a useful coloring with less than $4$ colors.

We show how to do two sweeps so that the end result is only $3$ colors. In brief, the first sweep uses \emph{tentative} colors from palette $\{1,2\}$, and the second sweep \emph{finalizes} the colors, depending on the tentative colors that we chose in the first step. Here the second sweep depends on the result of the first sweep, while in prior algorithms the two sweeps are independent.
\paragraph{Lower bound for 2-coloring.}
We show that $2$-partial $2$-coloring in $d$-regular graphs for any constant $d$ is at least as hard to solve as \emph{sinkless orientation}, which is known to be hard \cite{Brandt2016}. The key obstacle here is that sinkless orientation is known to be hard even if we are given a \emph{proper} $2$-coloring of the graph, so how could a \emph{partial} $2$-coloring help with it?

The basic idea is as follows: Assume we have a fast algorithm $A_1$ that finds a $2$-partial $2$-coloring in $d_1$-regular graphs. Then we can construct algorithm $A_2$ that finds a sinkless orientation in $d_2$-regular graphs, for a certain constant $d_2 \gg d_1$ that depends on the exact running time of $A_1$. Given a $d_2$-regular graph $G_2$, algorithm $A_2$ first replaces all nodes with appropriate gadgets to obtain a $d_1$-regular graph $G_1$, applies $A_1$ to $G_1$, and extracts enough information from the partial coloring so that it can find a sinkless orientation. But sinkless orientation is hard also in $d_2$-regular graphs, no matter how large a constant $d_2$ is.

\section{Partial colorings with more than two colors}

In this section we analyze the distributed complexity of $k$-partial $c$-coloring in the case where $c$ is at least $3$. More formally, we will prove the following theorem.

\begin{theorem}\label{thm:upper}
    There exists an algorithm running in $O(\log^* n)$ that is able to compute:
    \begin{itemize}[noitemsep]
        \item A $k$-partial $3$-coloring, if $d \ge 3k-4$ and $k \ge 3$;
        \item A $k$-partial $k$-coloring, if $d \ge k+2$ and $k \ge 4$.
    \end{itemize}
\end{theorem}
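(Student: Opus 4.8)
The plan is to build an $O(\log^* n)$-round algorithm on the framework used for partial $4$-coloring in \cite{barenboim14distributed}: first compute an $O(\log^* n)$-round defective-coloring layering of the vertex set, then process the layers by local sweeps. The starting point is the defective variant of Linial's algorithm: in $O(\log^* n)$ rounds one can compute a $\lfloor d/p \rfloor$-defective $p^2$-coloring for an arbitrary constant $p$. We use such a coloring with $p$ chosen to be a sufficiently large constant so that the ``color classes'' it produces induce subgraphs of degree at most $\lfloor d/p\rfloor$, which we will treat as layers $L_1, L_2, \dots, L_{p^2}$ (the precise constant $p$ only affects the hidden constant in $O(\log^* n)$). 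Within each layer the induced graph has small maximum degree, and crucially every node of layer $L_i$ has \emph{at most} $\lfloor d/p \rfloor$ neighbors in $L_i$ and hence \emph{at least} $d - p^2\lfloor d/p\rfloor + \text{(stuff)}$ — more usefully, summing over all layers other than its own, a node has many neighbors in strictly earlier or strictly later layers. We then orient each layer-crossing edge from the lower-index layer to the higher-index layer, which makes the layered structure a DAG of constant depth $p^2$, and every node has a well-defined set of ``down-neighbors'' (earlier layers) and ``up-neighbors'' (later layers).

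For the $k$-partial $3$-coloring bound with $d \ge 3k-4$: the idea sketched in the introduction is to do two sweeps, a tentative one top-to-bottom over palette $\{1,2\}$ and a finalizing one bottom-to-top that, at each node, looks at the tentative colors of its neighbors and the already-final colors of its up-neighbors, and picks its final color from $\{1,2,3\}$ so as to disagree with as many neighbors as possible. The sweeps run in $O(p^2) = O(1)$ rounds on top of the $O(\log^* n)$ layering, so the total is $O(\log^*n)$. The accounting we need is: each node $v$ has $d$ neighbors; partition them into same-layer neighbors (at most $\lfloor d/p\rfloor$, negligibly few once $p$ is large), down-neighbors, and up-neighbors. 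In the finalizing sweep $v$ already knows the final colors of all up-neighbors; since it has $3$ colors available and each of the (few) same-layer-or-already-processed constraints forbids at most one color, $v$ can always pick a color avoided by a controlled number of them, and meanwhile it disagrees with a tentative-color neighbor whenever the tentative and final assignments differ, which we arrange to happen often by the design of the tentative sweep. The bookkeeping has to be set up so the guaranteed number of properly colored incident edges is at least $k$ precisely when $d \ge 3k-4$; I expect this to come down to: from the two $\{1,2\}$-palettes one extracts roughly $d/3$ good edges per sweep in the worst case (since with two colors a greedy choice disagrees with at least half the relevant neighbors, and the third color is used to repair the few conflicts), and optimizing the split of the $d$ edges among down/up/same-layer roles yields the $3k-4$ threshold.

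For the $k$-partial $k$-coloring bound with $d \ge k+2$ and $k \ge 4$: here palette size $k$ is large relative to $d$, so the problem is close to proper coloring and we can be much more wasteful. Again take the $O(\log^*n)$ layering into constantly many layers of bounded induced degree, orient edges from lower to higher layers, and sweep once from top layer to bottom layer: when node $v$ is processed, it has already-final colors on all its up-neighbors and it must pick a color from $\{1,\dots,k\}$. If $v$ has $u$ up-neighbors then at most $u$ colors are ``spent,'' so $v$ has at least $k-u$ colors that disagree with \emph{all} up-neighbors; it picks one of them, getting $u$ good edges for free from the up-neighbors plus $k-u-1$ more colors still available — wait, more carefully: $v$ picks a color disagreeing with all $u$ up-neighbors if $u \le k-1$, and if $u = k$ or more it still picks the least-used color, disagreeing with at least $u - \lfloor u/k\rfloor$ of them. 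Combining the guaranteed disagreements with up-neighbors and, symmetrically, counting that down-neighbors will later each have a fair chance of disagreeing with $v$, one shows that every node ends up with at least $k$ properly colored incident edges as soon as $d \ge k+2$; the ``$+2$'' slack absorbs the small number of same-layer neighbors and the rounding losses. The clean way to present this is probably an averaging argument over the whole DAG: the total number of monochromatic edges is small because each node in its sweep step minimizes (greedily) the count of monochromatic edges to already-decided neighbors, and a global potential/exchange argument bounds the per-node monochromatic degree by $d-k$.

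The main obstacle I anticipate is the tight constant in the $3$-colour case: making the two interdependent sweeps yield \emph{exactly} the $d \ge 3k-4$ guarantee (rather than something weaker like $d \ge 3k$ or $d \ge 3k-2$) requires a careful, layer-by-layer amortized count of good edges, tracking separately the edges gained in the tentative sweep, the edges gained in the finalizing sweep, and the edges that might be ``lost'' when a node changes its tentative color — and arguing that these losses are fully compensated by the freedom the third color buys. The $k$-partial $k$-coloring case, by contrast, should be comparatively routine: it is essentially a single greedy DAG-sweep with a slack of $2$ in the degree, and the only care needed is in handling same-layer neighbors and rounding, which the ``$+2$'' is designed to cover.
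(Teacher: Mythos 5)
There is a genuine gap, in fact two. First, for the $k$-partial $3$-coloring claim you reproduce the high-level plan (a tentative sweep over $\{1,2\}$ followed by a finalizing sweep over $\{1,2,3\}$), but the entire substance of the proof is the argument that a valid final choice \emph{always} exists when $d\ge 3k-4$, and you explicitly defer it (``I expect this to come down to\dots'', ``the main obstacle I anticipate\dots''). The paper settles this with a three-way case analysis --- keep the tentative color, switch to the special color $3$, or switch to the other non-special color --- and derives the contradiction $d\le 3k-5$ if none applies. You also omit the rule that makes the guarantee robust: a node may switch to a color in $\{1,2\}$ only if \emph{no neighbor below it} currently uses that color; without this, the count of properly colored edges you secure at $v$ can be destroyed when the nodes above $v$ are finalized later in the second sweep, so ``disagree with as many neighbors as possible'' is not a sound greedy rule.

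Second, your proposed single greedy DAG-sweep for $k$-partial $k$-coloring with $d\ge k+2$ is not just incomplete but fails. Consider a source $v$ (all $d$ neighbors below it) colored $1$; when a down-neighbor $w$ is processed it only guarantees disagreement with all of \emph{its} up-neighbors if it has at most $k-1$ of them, and if $w$ has exactly $k$ up-neighbors carrying $k$ distinct colors it is forced to repeat one of them --- possibly $v$'s. Nothing prevents every down-neighbor of $v$ from repeating $v$'s color, leaving $v$ with zero properly colored edges. An averaging or potential argument over the whole DAG cannot repair this, because the $k$-partial requirement is a per-node worst-case guarantee, not a bound on the total number of monochromatic edges. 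The paper avoids exactly this trap by again using two sweeps: the first sweep is restricted to colors $1,\dots,k-1$ (reserving color $k$ as a special escape color), and the second sweep lets each node either keep its color, switch to the special color, or switch to a ``free'' non-special color unused below it, with a case analysis showing one of these always yields $k$ properly colored incident edges when $d\ge k+2$ and $k\ge 4$.
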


In order to prove the theorem, we start by providing an algorithm, and then we will analyze it for the two cases separately.

\paragraph{The algorithm.}
The algorithm that we propose is inspired by the procedure \emph{Refine} in \cite[Sect.~6]{Barenboim2013}. This procedure starts by first finding an acyclic partial orientation, and then assigns two colors for each node by exploiting the two possible orders given by the orientation. It finally combines the two colors to determine the output color. Our algorithm starts in the same way, but it does \emph{not} compute two independent colors, allowing us to be slightly more efficient in some cases.

We start by finding an acyclic partial orientation of the edges. That is, we first compute an $O(d^2)$ coloring in $O(\log^* n)$ rounds. Then, we assign a total order to the colors, and we orient the edges from the node with the smaller color to the node with the bigger one. The obtained directed graph is clearly acyclic, and all directed paths are of length at most $O(d^2)$. Nodes reachable from $v$ through outgoing edges are considered to be \emph{above} $v$, while the others are considered to be \emph{below} $v$.

Now, we do two ``sweeps'' on the obtained acyclic graph, that is, we first process the nodes from above to below, and then we process them in the reverse order. More precisely, we start by processing the sinks, and then we continue by processing all nodes such that all of the nodes above them have already been processed. This is iterated until all nodes have been processed. Then, we repeat the same procedure in reverse order, i.e., from below to above. Each sweep takes $O(d^2)$ rounds.

During the first sweep, we assign to each node $v$ a temporary color, by choosing the color that is the least used one among the neighbors above $v$. Crucially, during this phase, the choice is not among the full palette, but only colors from $1$ to $c-1$ are allowed. We call color $c$ the \emph{special} color. During the second sweep, each node $v$ has three options:
\begin{itemize}
    \item Keep the current color.
    \item Choose to switch to color $c$.
    \item Choose to switch to a color from $1$ to $c-1$. This option is allowed only if no neighbor below $v$ is currently using that color.
\end{itemize}

\begin{figure}
    \centering
    \includegraphics[scale=0.9]{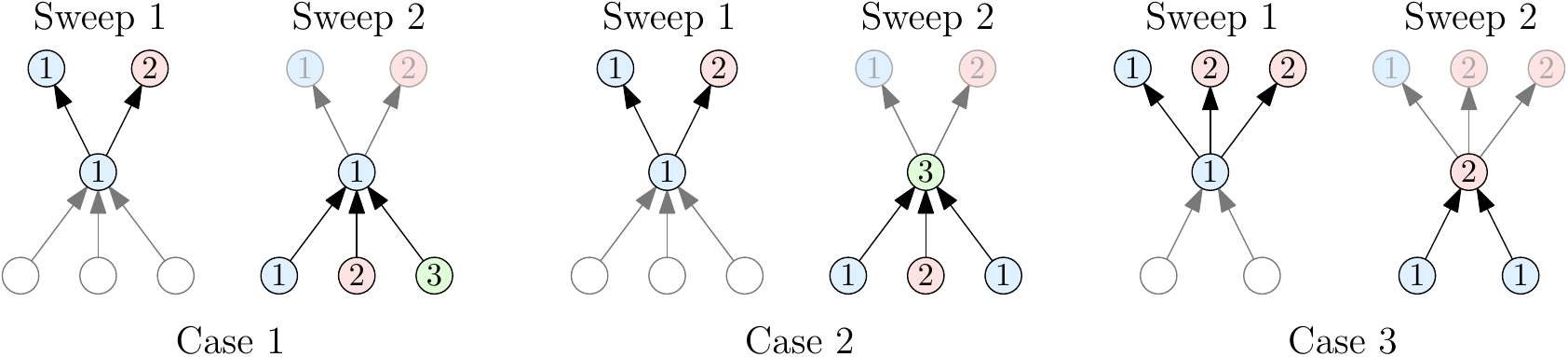}
    \caption{Examples of the output of the $3$-partial $3$-coloring algorithm, running at the central node $v$, in a graph where the degree is $5$. The figure shows 3 cases: $v$ keeps the color chosen during the first sweep, $v$ switches to the special color in the second sweep, $v$ switches to color $2$ in the second sweep.}\label{fig:algo}
\end{figure}

Different choices give different guarantees. For example, by choosing to not change the color, or by choosing to switch to a color from $1$ to $c-1$, node $v$ is ensured that the number of properly colored neighbors does not decrease when the nodes above it will be processed. This property is guaranteed by the fact that a node can switch to a color in $\{1,\ldots,c-1\}$ only if no node below it is using that color. On the other hand, a node may switch to color $c$ even if some neighbor below it is using color $c$ as well, but then it loses any guarantee about the nodes above it, which may all decide to switch to color $c$. See Figure \ref{fig:algo} for an example of the execution of the algorithm.

The algorithm will make a choice that guarantees that $k$ neighbors have different color, regardless of whether the nodes above change their color (subject to the above rules) or not. Accordingly, our task is to prove that such a choice always exists, provided that $d$ is large enough.

\paragraph{\boldmath $k$-partial 3-coloring.}
We now show that the above described algorithm is able to compute a $k$-partial $3$-coloring if $d \ge 3k-4$ and $k \geq 3$. In order to analyze the algorithm running on node $v$, assume without loss of generality that during the first sweep, $v$ picks color $1$ and $t$ nodes above $v$ chose color $2$. Note that there are no more than $t$ other nodes above $v$, as $v$ picked the color out of $\{1,2\}$ that was least used by the nodes above it in the first sweep. Denote by $x,y,z\geq 0$ the numbers of nodes below $v$ that are colored $1$, $2$, and $3$, respectively, after making their final choice in the second sweep. Thus, the number of nodes above $v$ that chose $1$ in their first sweep equals $d-t-x-y-z\leq t$.

We make a case distinction.
\begin{enumerate}
  \item $t+y+z\geq k$. Thus, $v$ can keep color $1$, as the $t$ nodes above it that have color $2$ must then choose a color different from $1$.
  \item $x+y\geq k$. Then $v$ can safely choose color $3$.
  \item $y=0$ and none of the other cases apply. Thus, $v$ is free to switch to color $2$. If it does so, it has $x+z$ nodes of different color below, and $d-t-x-z$ nodes above that choose a different color than $2$. As the first case does not apply and $k\geq 3$, these are at least $d-t\geq d - (k-1)\geq 2k-3 \geq k$ nodes. Hence, switching to color $2$ is indeed a valid choice.
\end{enumerate}
Hence, it suffices to show that one of the cases must apply. Assume for contradiction that this is false. Thus,
\begin{align*}
t+y+z &\le k-1\,,\\
x+y &\le k-1\,,\\
y &\ge 1\,,\\
\mbox{and}\qquad \qquad d&\le 2t+x+y+z\,,
\end{align*}
yielding the contradiction that
\begin{equation*}
d\leq 2t+x+y+z\leq 2(k-1)-y-z+x \leq 3(k-1)-2y-z\leq 3k-5\,.
\end{equation*}

\paragraph{\boldmath $k$-partial $k$-coloring.}
We now show that the above described algorithm is able to compute a $k$-partial $k$-coloring if $d \ge k+2$ and $k \ge 4$. We analyze this case similarly to the case before. Let $t$ be the number of nodes above $v$ of color different from $v$ after the first sweep. Without loss of generality, assume that the color of $v$ is $1$, and the special color is $k$. Let $x$, $y$, and $z$ be the number of nodes below $v$ colored with $1$, with color $c$ such that $2 \le c \le k-1$, and with color $k$, respectively. Recall that only colors from $1$ to $k-1$ are allowed during the first sweep. As $v$ chooses a minority color among its above neighbors' choices, we have $r:=d-t-x-y-z\leq t/(k-2)$ remaining nodes above $v$ that choose color $1$ in the first sweep.

Let us analyze the second sweep. We make a case distinction.
\begin{enumerate}
  \item $t+y+z\geq k$. Then $v$ can keep color $1$.
  \item $x+y\geq k$. Then $v$ can safely choose color $k$.
  \item There are $f>0$ ``free'' colors from $\{2,\ldots,k-1\}$ that no neighbor below $v$ chose and none of the other cases applies. If a free color was picked by at most $d-k$ above neighbors of $v$, it may select it with the guarantee that its other $k$ neighbors end up with a different color.
  
  Assuming for contradiction that there is no such free color, observe that the least used free color was picked by at most $\lfloor t/f\rfloor$ neighbors above $v$ in the first sweep. Accordingly, $t\geq f(d-k+1)\geq d-k+f$. Moreover, clearly $f\geq k-2-y$ and, because the first case does not apply, $x+r=d-t-y-z\geq d-(k-1)\geq 3$. Thus, we can lower bound the total number of neighbors of $v$ by
  \begin{equation*}
  d = r + t + x + y + z \geq d-k+f+y+3\geq d+1\,,
  \end{equation*}
  a contradiction. Therefore, one of the free colors is a valid choice for $v$.
\end{enumerate}
Hence, there is indeed always a valid choice if we can show that the above case distinction is exhaustive. Assuming otherwise, collecting inequalities from the cases and the earlier bound on $r$ we obtain that
\begin{align*}
t+y+z&\leq k-1\,,\\
x+y&\leq k-1\,,\\
y&\geq k-2\,,\\
\mbox{and}\qquad \qquad r&\leq \frac{t}{k-2}\,.
\end{align*}
Together, this implies
\begin{equation*}
\begin{split}
(k-2)d &\leq (k-1)(t+x+y+z)-y\leq (k-1)^2+(k-1)(x+y)-ky \\ &\leq 2(k-1)^2-k(k-2)=k^2 - 2k +2\,,
\end{split}
\end{equation*}
yielding the contradiction that $d \leq k+2/(k-2)<k+2$ (using that $k\geq 4$).

\section{Two-partial two-coloring}
In this section, we show that, in the \local{} model, $2$-partial $2$-coloring requires $\Omega(\log n)$ deterministic time and $\Omega(\log \log n)$ randomized time in any $d$-regular tree, where $d\ge 2$. We show the result by reducing from the sinkless orientation problem, for which we know that its distributed deterministic complexity is $\omega(\log^* n)$ rounds in the \local{} model.

Informally, the proof proceeds in two steps. We first show that, if we can solve 2-partial 2-coloring in constant time in a slightly modified version of the \local{} model, which we call \dclocal{} model, then we can solve sinkless orientation in the \local{} model in $O(\log^* n)$ rounds. Subsequently, we show that, if we can solve an \lcl{} problem $P$ in $O(\log^* n)$ rounds in the \local{} model, then we can solve $P$ in the \dclocal{} model in constant time using a simulation similar to that of \citet{chang16exponential}. Therefore no $O(\log^* n)$-round algorithm exists, i.e., the problem is not easy and hence it has to be hard, i.e., it requires $\Omega(\log n)$ deterministic time and $\Omega(\log \log n)$ randomized time.

\begin{theorem}\label{thm:lower_bound}
    Computing a $2$-partial $2$-coloring in $d$-regular trees in the \local{} model requires $\Omega(\log n)$ deterministic time and $\Omega(\log \log n)$ randomized time, for any $d \ge 2$.
\end{theorem}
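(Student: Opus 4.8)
The plan is to prove the theorem by reducing from \emph{sinkless orientation} --- orienting every edge of a graph so that no node becomes a sink --- which is known to require $\omega(\log^* n)$ deterministic rounds on $\Delta$-regular trees for every $\Delta \ge 3$ \cite{Brandt2016}. Since $2$-partial $2$-coloring in $d$-regular trees is an \lcl{}, the gap theorem \cite{chang16exponential} guarantees that the problem is either \emph{easy} (solvable in $O(\log^* n)$ rounds) or \emph{hard} (the bounds claimed above), so it suffices to rule out the easy case. Assume for contradiction that for some $d_1 \ge 2$ there is an $O(\log^* n)$-round \local{} algorithm computing a $2$-partial $2$-coloring in every $d_1$-regular tree. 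The case $d_1 = 2$ is classical: a degree-$2$ node must have both incident edges properly colored, so the task degenerates to properly $2$-coloring a path, which cannot be done in $O(\log^* n)$ rounds by the usual indistinguishability argument \cite{Linial1992}. So from now on assume $d_1 \ge 3$; the goal is to turn the hypothetical algorithm into an $O(\log^* n)$-round \local{} algorithm for sinkless orientation on $d_2$-regular trees for a suitable constant $d_2 \ge 3$, contradicting \cite{Brandt2016}.

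\emph{Step~A: compression to strictly constant time.} First I would route the hypothetical algorithm through the \dclocal{} model --- the \local{} model in which a proper coloring of a bounded power of the graph is provided as input (formally defined below) --- to make it run in a \emph{fixed} number of rounds. This is exactly the simulation of Chang et al.\ \cite{chang16exponential}: the only obstacle to an $O(\log^* n)$-round algorithm being constant-round is the recomputation of a coloring in the style of Linial \cite{Linial1992}, and if such a coloring is handed over for free the algorithm terminates in $O(1)$ rounds. Thus we obtain a constant $r$ and a \dclocal{} algorithm $B$ that computes a $2$-partial $2$-coloring of any $d_1$-regular tree by inspecting only radius-$r$ balls. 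Pinning $r$ down to a constant here is what makes the gadget of the next step finite.

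\emph{Step~B: the gadget reduction.} Let $G_2$ be a $d_2$-regular tree, where $d_2$ is a sufficiently large constant whose value comes out of the decoding lemma below (and depends on $d_1$ and $r$). I would replace each node $v$ of $G_2$ by a gadget $\Gamma_v$: a constant-size $d_1$-regular tree fragment carrying $d_2$ designated \emph{port} nodes, one for each edge incident to $v$. Joining the gadgets along the edges of $G_2$ --- linking, for every edge $uv$, the corresponding port of $\Gamma_v$ to that of $\Gamma_u$ through a constant-size connector chosen so that all degrees stay equal to $d_1$ --- produces a $d_1$-regular tree $G_1$ with $|V(G_1)| = \Theta(|V(G_2)|)$. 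Each node of $G_2$ can build and simulate its own gadget, so the whole construction is local: the node simulating $\Gamma_v$ first spends $O(\log^* n)$ rounds computing the auxiliary coloring of $G_1$ that $B$ expects, then runs $B$ in $O(1)$ further rounds (a radius-$r$ ball in $G_1$ lies inside an $O(1)$-radius ball of $G_2$), obtaining a $2$-partial $2$-coloring $f$ of $G_1$. Finally, from $f$ restricted to an $O(1)$-neighborhood of $\Gamma_v$, node $v$ decodes an orientation of its $d_2$ edges. The gadget is to be engineered so that (i) the decoding rule depends only on $f$ near $\Gamma_v$, hence is $O(1)$-local in $G_2$, and (ii) the orientation produced always leaves at least one port of $\Gamma_v$ outgoing, i.e., there is no sink. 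Property (ii) is the heart of the construction and is where both the size of $d_2$ and the $2$-partial constraint are used: inside a $d_1$-regular gadget the constraint says every internal node is incident to at most $d_1 - 2$ monochromatic edges, so the monochromatic edges of $f$ within $\Gamma_v$ are structurally sparse; one designs $\Gamma_v$ so that (a) a \emph{proper} $2$-coloring of the tree $G_1$ forces the port pattern of $\Gamma_v$ into one of only two shapes, neither of which is ``all ports incoming'', (b) each extra monochromatic edge inside $\Gamma_v$ can flip only a bounded part of that pattern, and (c) with $d_2$ ports a counting argument always exhibits an outgoing port. This yields sinkless orientation on $d_2$-regular trees in $O(\log^* n) + O(1)$ rounds, contradicting \cite{Brandt2016} since $d_2 \ge 3$.

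Putting the pieces together, $2$-partial $2$-coloring in $d$-regular trees is not easy for any $d \ge 2$, and therefore --- by the gap theorem \cite{chang16exponential} --- it is hard, i.e., it requires $\Omega(\log n)$ deterministic and $\Omega(\log\log n)$ randomized rounds. I expect the main obstacle to be Step~B, specifically the design of $\Gamma_v$ and the proof of the decoding lemma, and in particular reconciling it with the fact that sinkless orientation stays hard even when a proper $2$-coloring is given for free: the resolution is that the hypothetical algorithm never runs on $G_2$ but on the blown-up tree $G_1$, so the labeling it is forced to output --- a $2$-partial $2$-coloring of $G_1$ --- carries strictly more usable information than a proper $2$-coloring of $G_2$ would, and the gadget is precisely the device that converts the local defect budget of that labeling into an orientation bit that no valid labeling can withhold.
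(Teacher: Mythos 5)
Your overall architecture matches the paper's exactly: compress the hypothetical $O(\log^* n)$-round algorithm to a constant-round algorithm in the \dclocal{} model via the Chang et al.\ simulation, then use a gadget construction to extract a sinkless orientation on a higher-degree tree. Step~A is essentially identical to the paper's Lemma~\ref{thm:local-to-dclocal} and is fine. The problem is Step~B, which you correctly identify as ``the heart of the construction'' but then leave as a sketch---and the sketch you do give does not work. Your plan (a)--(c) rests on treating the monochromatic edges of $f$ inside a gadget as a small perturbation of a proper $2$-coloring: ``each extra monochromatic edge can flip only a bounded part of the pattern'' plus a counting argument over $d_2$ ports. But the $2$-partial condition is a \emph{per-node} defect bound ($\le d_1-2$ monochromatic edges at each node), not a global one: a gadget on $N$ nodes may contain $\Theta(N d_1)$ monochromatic edges, so the coloring can be arbitrarily far from proper and there is no bounded ``defect budget'' to count against. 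As stated, property~(ii) has no proof.

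The paper's mechanism is different and you would need something like it. First, the gadget interiors are \emph{pre-colored} with specific reference distance-$(k+1)$ colorings $T_w$ and $T_b$, extracted from a single run of $A$ on a reference tree, chosen so that $A$'s output at the gadget root is known in advance (white resp.\ black); the existence of both a white-rooted and a black-rooted gadget follows from the pigeonhole on a root and its $\ge 2$ properly colored neighbors. Without controlling $A$'s output at designated nodes you cannot set up a decoding rule at all. Second, the existence of an outgoing port comes not from counting but from a parity-propagation property (Property~\ref{prop:gadget}): in any $2$-partial $2$-coloring of a tree, if all nodes at even distance $2t$ from the root avoided the root's color, then the $2$-partial constraint would force the colors to alternate back up the levels and flip the root's own color---so some node at every even level shares the root's color, no matter how many monochromatic edges the coloring contains. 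Third, your proposal never addresses \emph{consistency}: both endpoints of an edge of $G_2$ must decode the same orientation for it. The paper achieves this by starting from a $2$-colored source tree (sinkless orientation remains hard there \cite{chang18complexity}), assigning $T_w$-gadgets to one class and $T_b$-gadgets to the other, and orienting $\{y,z\}$ toward $z$ iff the shared merged leaf $m_{y,z}$ has $y$'s root color---which happens for exactly one endpoint because the two roots have opposite colors. In your version $G_2$ carries no $2$-coloring and both endpoints would use the same gadget, so the decoding rule as described can orient an edge both ways or neither. These three ingredients---controlled reference pre-colorings, the even-level propagation lemma, and the $2$-colored source instance for consistency---are exactly what is missing from your Step~B.
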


\paragraph{\boldmath\dclocal{} model.}
Consider the usual \local{} model with the following modification. Instead of having unique identifiers, nodes are given as input a color from a palette of $c$ colors, and this coloring of the nodes guarantees a distance-$k$ coloring of the graph. In other words, each node sees different colors in its distance-$k$ ball, but it may see repeated colors in its distance-$(k+1)$ ball.  We call this model \dclocal$(k,c)$ (where $\sf{DC}$ stands for distance coloring).

\subsection{\boldmath A lower bound for the \dclocal{} model}

In this section, we show that $2$-partial $2$-coloring in $d$-regular trees is not solvable in $k=O(1)$ rounds in the \dclocal$(k+1,d^{2(k+1)})$ model. We show this by reducing from the sinkless orientation problem in $2$-colored trees in the \local{} model. More precisely, we show that, if there is an algorithm $A$ solving $2$-partial $2$-coloring in time $k=O(1)$ in the \dclocal{} model, then we can use it to design an $O(\log^* n)$-round algorithm that solves sinkless orientation in $2$-colored trees in the \local{} model. This would give an $\omega(1)$ lower bound for $2$-partial $2$-coloring in the \dclocal{} model, since we know that sinkless orientation requires $\Omega(\log n)$ rounds, even in $2$-colored trees, in the \local{} model \cite{chang18complexity}.

\begin{figure}[t]
    \centering
    \includegraphics[scale=0.7]{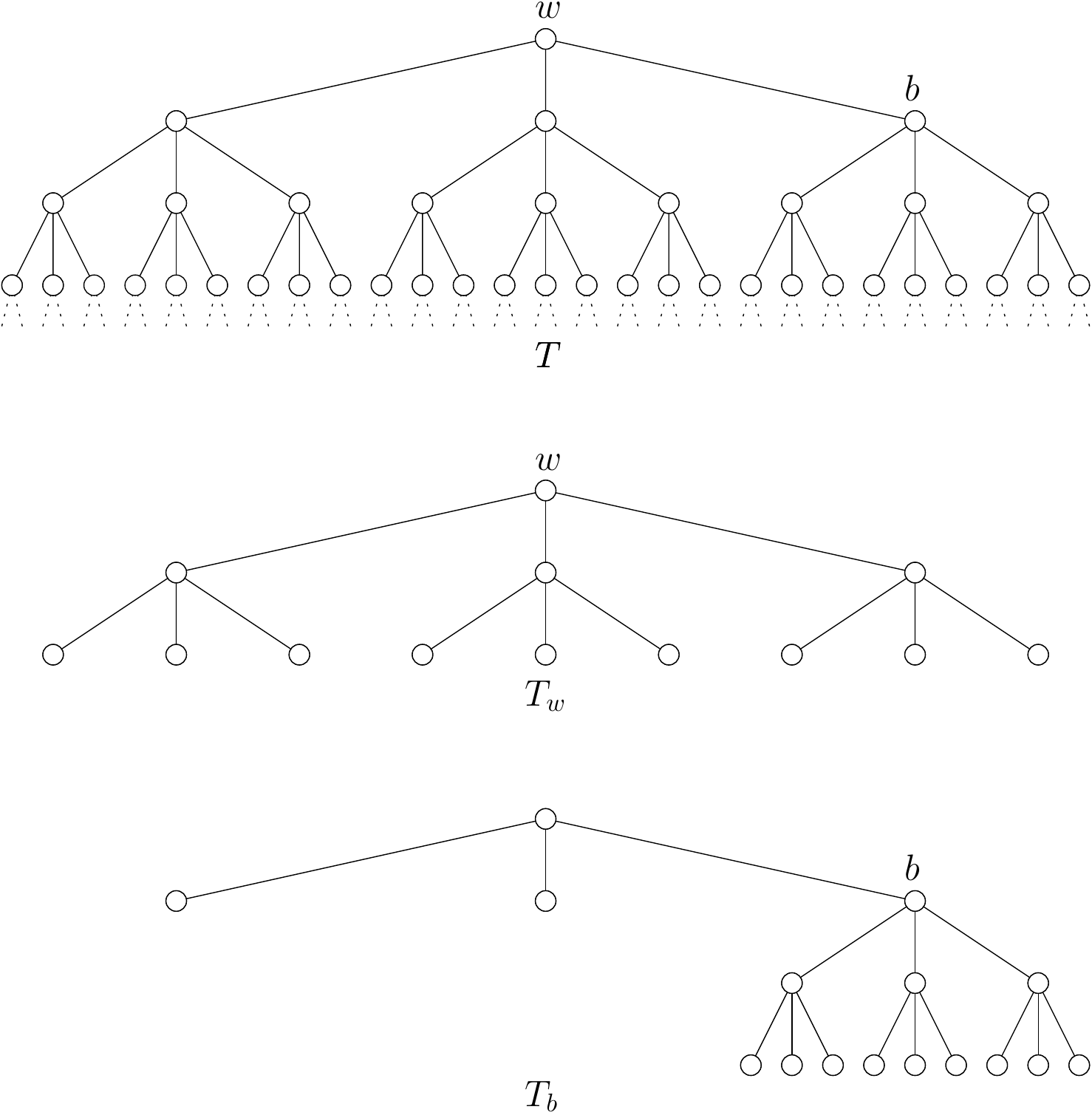}
    \caption{In this example $k=2$. Algorithm $A$ outputs ``white'' on the root of $T$ and ``black'' on the leftmost child of the root. The gadgets $T_w$ and $T_b$ are trees with depth $2$ rooted at $w$ and $b$ respectively.}\label{fig:gadget}
\end{figure}

\paragraph{Gadgets.}
Let $A$ be the algorithm that solves $2$-partial $2$-coloring in time $k=O(1)$ rounds in $d$-regular trees in the \dclocal$\bigl(k+1,d^{2(k+1)}\bigr)$ model. We introduce two gadgets that we will use later for proving the lower bound. Let $T$ be an arbitrarily distance-$(k+1)$ colored $d$-regular tree of depth $k+3$, and let $u$ be the root of $T$.
We run algorithm $A$ on $u$ and on each of its neighbors $v\in N(u)$. We denote with $A(v)$ the output of algorithm $A$ on a node $v$. Notice that $A(v)$ is well-defined on these nodes, since their $k$-radius ball is properly distance-$k$ colored and fully contained in $T$. Since algorithm $A$ solves $2$-partial $2$-coloring, we are sure that there are two nodes $v,z \in N(u)$ such that $A(u)\neq A(v)=A(z)$. Let $b\in \{u,v,z\}$ be a node for which $A$ outputs ``black'' and $w\in \{u,v,z\}$ be a node for which $A$ outputs ``white''. Let $T_w$ and $T_b$ be the subtrees of depth $k$ rooted at $w$ and $b$ respectively: these are our gadgets (see Figure \ref{fig:gadget} for an example). The gadgets $T_w$ and $T_b$ satisfy the following property.
\begin{property}\label{prop:gadget}
    Let $c\in\{\mbox{black, white}\}$ be the color of the root of the gadget and let $\bar{c}$ be the opposite color. Among all nodes at distance $2t$ from the root of the gadget, there must be at least one node for which the algorithm outputs color $c$.
\end{property}
\begin{proof}
    Assume that $A$ outputs $\bar{c}$ at all nodes at distance $2t$ from the root. Then all nodes at distance $2t-1$ must have color $c$ in order to guarantee a $2$-proper $2$-coloring. This would imply that all nodes at distance $2t-2$ have color $\bar{c}$. By applying this reasoning recursively, we conclude that the root must have color $\bar{c}$, which is a contradiction.
\end{proof}

\paragraph{Reduction.}
We now show that if there exists an algorithm $A$ that solves $2$-partial $2$-coloring in time $k=O(1)$ rounds (where $k$ is even) in $d$-regular trees in the \dclocal$\bigl(k+1,d^{2(k+1)}\bigr)$ model, then we can design an algorithm $A'$ that solves sinkless orientation on trees of degree $\Delta=d^{2k}$ in which a $2$-coloring of the tree is given, in $O(\log^* n)$ rounds in the \local{} model.

Consider a $\Delta$-regular $2$-colored tree $B=(V\cup U, E)$, where $V$ and $U$ represent the set of nodes belonging to the two color classes. We construct a virtual tree in the following way. Each node $x\in V\cup U$ pretends to be the root of a $d$-regular tree of depth $2k$. We call this tree \tvirt$(x)$. Then, each node $v \in V$ (resp. $u \in U$) labels the nodes at distance at most $k$ with the same colors of the nodes of the gadget $T_w$ (resp. $T_b$). Note that this is possible since $T_w$ and $T_b$ are isomorphic to the subgraph induced by the nodes at distance at most $k$ from $v$ and $u$. Then, we merge the $i$-th leaf of \tvirt$(v)$ with the $j$-th leaf of \tvirt$(u)$ if and only if there is an edge $\{v,u\}$ connecting $v$ and $u$ through port $i$ of $v$ and port $j$ of $u$. We call this node $m_{u,v}$ (merged node). In order to make the graph $d$-regular, we attach additional $d-2$ virtual nodes to each merged node. Note that, since the original tree $B$ is $\Delta$-regular, and since each virtual tree \tvirt{} has $\Delta$ leaves, then all the leaves of \tvirt{} are merged nodes (except for the case in which the original node is a leaf, where just one leaf of \tvirt{} is merged). We then color the nodes that are still uncolored, that is, nodes at distance more than $k$ from the root of the virtual trees, using a distance-$k$ coloring algorithm. Since the already colored parts are far enough apart, this can be done efficiently, in $O(\log^* n)$ rounds.

\begin{figure}
    \centering
    \includegraphics[scale=0.65]{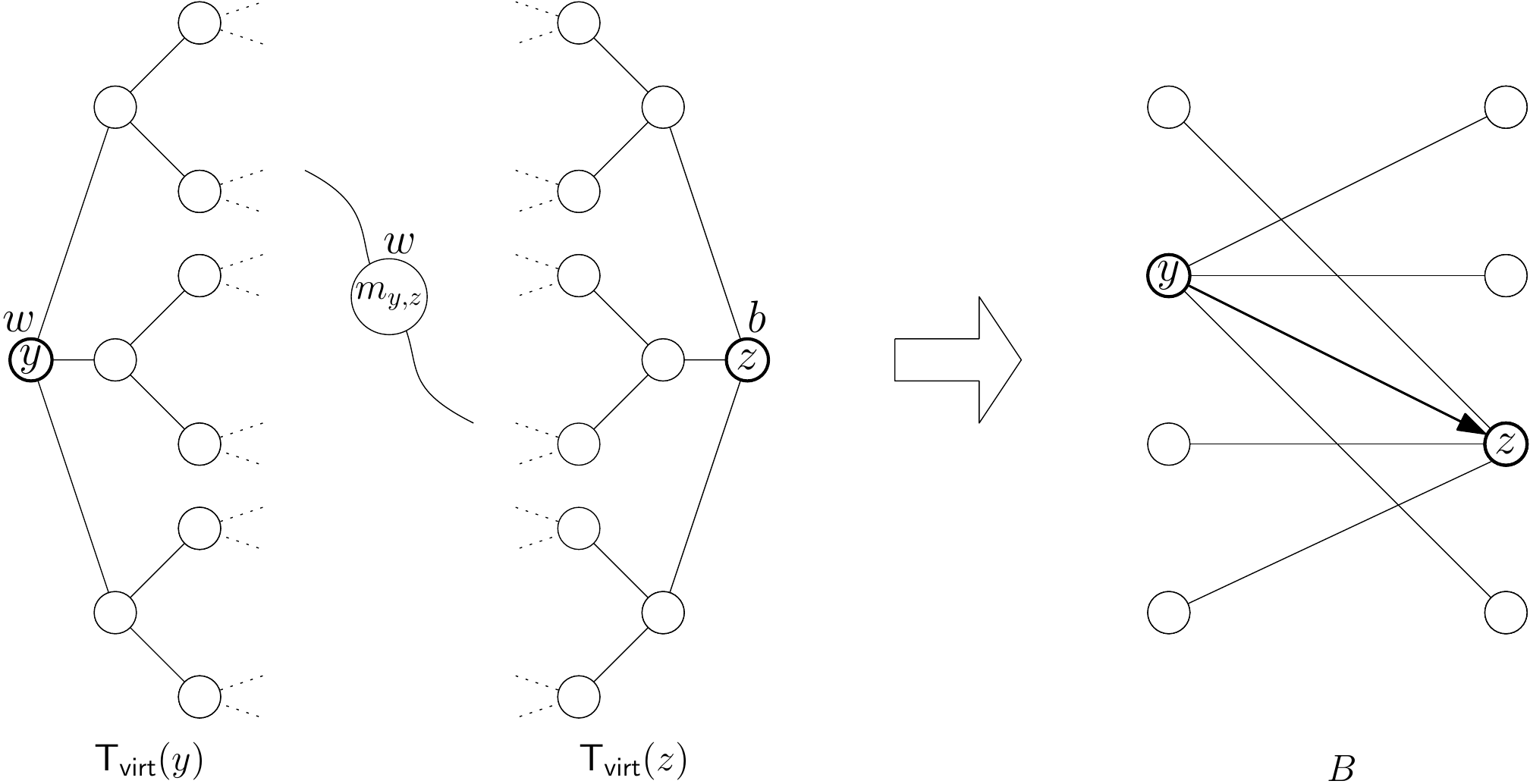}
    \caption{In this example, the merged node of $\tvirt(y)$ and $\tvirt(z)$ has the same color as node $y$, hence the edge $\{y,z\}$ in $B$ is oriented from $y$ to $z$.}\label{fig:virtual-graph}
\end{figure}

Now, each node $v$ in $B$ simulates the $k$-round algorithm $A$ on all nodes of $\tvirt(v)$ and gets a color $c\in\{ \mbox{black, white} \}$ for each node. This requires constant time. Algorithm $A$ outputs ``white'' at all nodes $v\in V$, since they have the same view as the root of $T_w$ up to distance $k$. Similarly, algorithm $A$ outputs ``black'' at all nodes $u\in U$. We then orient an edge of $B$ from node $y$ to node $z$ if and only if $m_{y,z}$ has the same color of $y$ (see Figure \ref{fig:virtual-graph} for an example). By Proposition \ref{prop:gadget}, we know that at least one such a leaf exists for each node in $B$, meaning that each node is guaranteed to have at least one outgoing edge. This would solve sinkless orientation on $2$-colored trees in the \local{} model in $O(\log^* n)$ rounds. Putting all together, we get the following lemma.

\begin{lemma}\label{thm:lb-in-DC}
    The 2-partial 2-coloring problem requires $\omega(1)$ rounds in the \dclocal{} model.
\end{lemma}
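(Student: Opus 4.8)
The plan is to argue by contradiction, assembling the gadget construction and the reduction laid out just above. First I would assume that for some constant $k$ there is an algorithm $A$ solving $2$-partial $2$-coloring in $k$ rounds in the \dclocal$\bigl(k+1, d^{2(k+1)}\bigr)$ model; we may take $k$ to be even, padding the round complexity to the next even value if necessary (which only strengthens the distance-coloring promise and so cannot hurt $A$). Running $A$ on the depth-$(k+3)$ tree $T$ at its root $u$ and at the neighbors of $u$ produces, exactly as in the \emph{Gadgets} paragraph, the two gadgets $T_w$ and $T_b$; by Property~\ref{prop:gadget} with $t=k/2$, each of these depth-$k$ gadgets has at least one leaf on which $A$ outputs the gadget root's own color.

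Next I would carry out the reduction verbatim: given a $\Delta$-regular $2$-colored tree $B=(V\cup U,E)$ with $\Delta=d^{2k}$, blow up each node into a copy of \tvirt, color the radius-$k$ ball around the root of each virtual tree according to $T_w$ for $V$-nodes and $T_b$ for $U$-nodes, merge the leaves of adjacent virtual trees along the edges of $B$, attach $d-2$ fresh leaves at every merged node to restore $d$-regularity, and extend the still-uncolored (far-apart) nodes to a distance-$k$ coloring in $O(\log^* n)$ rounds. Every node of $B$ then locally simulates the $k$-round algorithm $A$ throughout its own virtual tree; since the radius-$k$ view of a $V$-node (resp.\ $U$-node) coincides with that of the root of $T_w$ (resp.\ $T_b$), $A$ answers ``white'' on all of $V$ and ``black'' on all of $U$, while the leaf-level guarantee of Property~\ref{prop:gadget} ensures that each \tvirt has at least one leaf carrying its root's color. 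Orienting an edge $\{y,z\}\in E$ from $y$ to $z$ precisely when the merged node $m_{y,z}$ shares $y$'s color therefore gives every node an outgoing edge, i.e.\ a sinkless orientation, computed in $O(\log^* n)$ rounds in the \local{} model.

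This contradicts the $\Omega(\log n)$ deterministic lower bound for sinkless orientation in $2$-colored trees in the \local{} model~\cite{chang18complexity}. Hence no constant-round algorithm $A$ of the stated form exists, which is the claim of the lemma.

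I expect the real work to lie in the verification, not the strategy. One has to check that the merged object is a genuine acyclic $d$-regular graph (acyclicity is inherited from $B$ and from the virtual trees, which are glued only at their leaves), that the colors it carries constitute a valid \dclocal$\bigl(k+1, d^{2(k+1)}\bigr)$ input so that every local simulation of $A$ is well-defined and returns the intended answer, and that the parity adjustment for $k$ does not disturb $d$-regularity or the color budget. The one conceptually non-obvious point --- why a mere \emph{partial} $2$-coloring is enough, given that sinkless orientation is hard even when a \emph{proper} $2$-coloring is supplied --- is exactly what Property~\ref{prop:gadget} resolves: the properly colored edge that the root of a gadget is guaranteed must resurface at every even depth, and the reduction routes that guaranteed edge into an outgoing edge of $B$.
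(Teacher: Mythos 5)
Your proposal is correct and follows essentially the same route as the paper: assume a constant-round algorithm in the \dclocal{} model, build the gadgets $T_w$ and $T_b$, invoke Property~\ref{prop:gadget} at the leaf level, and run the virtual-tree reduction to obtain an $O(\log^* n)$-round sinkless orientation algorithm on $2$-colored trees, contradicting its known hardness. The extra verification points you flag (parity of $k$, $d$-regularity and acyclicity of the merged graph, validity of the distance coloring) are exactly the details the paper also relies on, handled the same way.
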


\subsection{\boldmath From \dclocal{} to \local}

We now show that all \lcl{}s solvable in $O(\log^*n)$ rounds can be solved in a standard manner, that is, first find a distance-$k$ coloring, for some constant $k$, and then apply a constant time algorithm running in at most $k$ rounds. In particular, we will prove the following lemma.
\begin{lemma}\label{thm:local-to-dclocal}
    Any \lcl{} problem that can be solved in $O(\log^* n)$ rounds in the \local{} model can be solved in $O(1)$ rounds in the \dclocal{} model.
\end{lemma}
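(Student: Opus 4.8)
The plan is to run the assumed $O(\log^* n)$-round \local{} algorithm ``as if'' the input had a fixed large constant size $n_0$, feeding it the supplied distance colouring in place of the unique identifiers it expects. Let $\mathcal A$ be such an algorithm for the \lcl{} $P$, with running time $T(n)\le\alpha\log^* n+\beta$, checking radius $r$, maximum degree $\Delta$, and identifiers from $\{1,\dots,n^{\gamma}\}$, where $\alpha,\beta,\gamma\ge 1,r,\Delta$ are constants. First I would pin down the constants. Since $\log^*$ grows more slowly than any positive power, $\Delta^{O(\log^* n_0)}=n_0^{o(1)}$, so there is a constant $n_0$ for which, writing $T_0:=T(n_0)$ and $K:=2(T_0+r)$, we have $\Delta^{2(K+1)}\le n_0$. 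Fix such an $n_0$, and note $K\ge T_0$ and $\Delta^{T_0+r+1}\le \Delta^{2(K+1)}\le n_0$.

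The \dclocal{}$(K+1,\Delta^{2(K+1)})$ algorithm is then the obvious simulation: each node spends $T_0\le K$ rounds collecting its radius-$T_0$ ball, pretends the instance size is $n_0$, uses the colour of every node it sees as that node's identifier, outputs whatever $\mathcal A$ outputs, and idles for the remaining $K-T_0$ rounds so that its running time is exactly $K=O(1)$. This is a syntactically legal invocation of $\mathcal A$: the colouring is a distance-$(K+1)$ colouring, hence injective on every radius-$(T_0+r)$ ball (as $K=2(T_0+r)$), in particular on every radius-$T_0$ ball, and the colour values lie in $\{1,\dots,\Delta^{2(K+1)}\}\subseteq\{1,\dots,n_0\}\subseteq\{1,\dots,n_0^{\gamma}\}$.

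The one genuine obstacle is correctness: a priori $\mathcal A$ is only guaranteed to behave well when the identifiers are \emph{globally} distinct, whereas our colours repeat far apart. I would close this gap with an indistinguishability argument in the spirit of \citet{chang16exponential}. Fix any node $v$; the outputs produced inside $\mathrm{ball}_r(v)$ are a function of the labelled, coloured neighbourhood $\mathrm{ball}_{T_0+r}(v)$. Since that neighbourhood is rainbow-coloured and is a finite subtree with at most $\Delta^{T_0+r+1}\le n_0$ nodes, I can embed it into a legal $P$-instance $H$ on exactly $n_0$ nodes — a sufficiently deep regular tree in which every node whose constraint we invoke is internal — preserving the radius-$(T_0+r)$ neighbourhood of $v$, and extend the colouring to a globally injective identifier assignment from $\{1,\dots,n_0^{\gamma}\}$ by padding the remaining nodes with fresh identifiers and arbitrary legal inputs (possible because $\Delta^{2(K+1)}\le n_0$). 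By assumption $\mathcal A$, run with parameter $n_0$, solves $P$ on $H$; and its outputs on the image of $\mathrm{ball}_r(v)$ depend only on the image of $\mathrm{ball}_{T_0+r}(v)$, which is identical, as a labelled and identified graph, to the original neighbourhood of $v$. Hence the simulation's outputs on $\mathrm{ball}_r(v)$ coincide with $\mathcal A$'s on $H$, so $P$'s constraint holds at $v$; since $v$ was arbitrary, the simulation solves $P$ in $K=O(1)$ rounds in the \dclocal{} model. In particular this is a $K$-round algorithm in \dclocal{}$(K+1,d^{2(K+1)})$ for the $d$-regular trees considered in Lemma~\ref{thm:lb-in-DC} (where $\Delta=d$), which is exactly the form the reduction there rules out, so combined with Lemma~\ref{thm:lb-in-DC} this yields Theorem~\ref{thm:lower_bound}.
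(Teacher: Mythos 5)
Your proof follows essentially the same route as the paper: fix a constant fake instance size $n_0$ by the self-referential choice that exploits the slow growth of $\log^*$, simulate the \local{} algorithm using the distance colouring in place of identifiers, and justify correctness by extending each rainbow-coloured radius-$(T_0+r)$ ball to a genuine size-$n_0$ instance with globally distinct identifiers on which $\mathcal A$ must succeed --- the paper states this last step contrapositively (a local failure of the simulation would yield a failure of $A$ on the padded instance), but it is the same indistinguishability argument, and your bookkeeping (even $K$, running time padded to exactly $K$, colours fitting in $\{1,\dots,n_0\}$) is sound. The one imprecision is that your embedding step speaks of ``subtrees'' and ``regular trees,'' which suffices for the tree instances of Lemma~\ref{thm:lb-in-DC} but not for a general \lcl{} on graphs with short cycles; there you should instead pad the induced subgraph $B_v(T_0+r)$ itself to exactly $n_0$ nodes (e.g.\ by attaching a path), as the paper does.
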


We prove the lemma by simulation. Let $P$ be an \lcl{} problem checkable in $r$ rounds, where $r$ is some constant. Assume that we have an algorithm $A$ for the \local{} model that solves $P$ in $f(n) = O(\log^*n)$ rounds on graphs of size $n$ in which nodes have unique identifiers in $\{1,\ldots,n\}$. Let $\Delta = O(1)$ be the maximum degree of the graph.

Fix $N$ to be the smallest integer such that $t = f(N)+1$ and $\Delta^{2(t + r)} < N$. We show that we can construct an algorithm $A'$ running in $t$ rounds that solves $P$ in the \dclocal$\bigl(t+r$, $\Delta^{2(t+r)}\bigr)$ model. Note that $t$ is constant. In other words, we design an algorithm that solves $P$ in constant time given the promise that nodes are labeled with a $\Delta^{2(t+r)}$-coloring of distance $(t+r)$. We assume that the diameter of the graph is at least $2t$, otherwise nodes can gather the entire graph in constant time and solve $P$ by brute force.

Algorithm $A'$ executed by a node $v$ is defined as follows.  At first, node $v$ gathers its distance-$t$ neighborhood $B_v(t)$. Then, node $v$ creates a virtual instance of $P$ by renaming the nodes in $B_v(t)$ and setting their identifiers as their assigned colors. Now, node $v$ simulates algorithm $A$ on $B_v(t)$, by lying about the size of the graph and setting it to be $N$. Finally, the output of $A'$ is defined to be the same as the output of algorithm $A$. Notice that this simulation is clearly possible, since $A$, running on instances of size $N$, terminates in strictly less than $t$ rounds.

We still need to show that the output is valid for the original \lcl{}. For this purpose, we show that, if the algorithm fails in some neighborhood, then we can construct an instance in which the original algorithm fails as well. Let $G$ be the graph in which, given a $\Delta^{2(t+r)}$-coloring of distance $(t+r)$, there is a node $v$ such that the verifier executed on $v$ rejects (after running for $r$ rounds). Consider $G' = B_v(t+r)$, the subgraph of radius $r+t$ centered at $v$. All nodes in $G'$ have different colors and the number of nodes is at most $N$, since $N$ satisfies $\Delta^{2(t + r)} < N$. We now modify $G'$ in order to make it a graph of size exactly $N$. For this purpose, we pick an arbitrary node at distance $t+r$ from $v$ (that exists by the diameter assumption), and we connect to it a path of as many nodes as needed. We then complete the coloring of these nodes in some consistent manner.

The identifiers of nodes in $G'$ are set to be equal to their colors. The ID space in $G'$ is in $1,\ldots,N$. At this point, we run algorithm $A$ on $G'$. Consider the set $S$ of nodes at distance at most $r$ from $v$. For every node $u\in S$, the $t$-neighborhood of $u$ is the same on $G$ and $G'$, hence the output of $A$ on these nodes must be the same as the output of $A'$. Thus, the failure of $A'$ on $G$ implies the failure of $A$ on $G'$.
Theorem \ref{thm:lower_bound} follows by combining Lemmas \ref{thm:lb-in-DC} and \ref{thm:local-to-dclocal}.

\section{Additional hardness results}

\begin{theorem} \label{thm:lb-proper-col}
Computing a $k$-partial $c$-coloring in $d$-regular graphs, for $k \ge \frac{c-1}{c}d + 1$, requires $\Omega(\log n)$ deterministic time and $\Omega(\log \log n)$ randomized time.
\end{theorem}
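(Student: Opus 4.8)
The plan is to reduce from proper $c$-coloring, which is known to be hard: there is no $O(\log^* n)$-round algorithm that properly $c$-colors $c$-regular trees \cite{Brandt2016,chang16exponential}. The starting point is an algebraic reformulation of the hypothesis. Writing $m := d-k$ for the per-node ``monochromatic budget'' that a $k$-partial $c$-coloring permits, the condition $k \ge \frac{c-1}{c}d+1$ is equivalent to $c(m+1)\le d$. This is precisely the statement that the clique $K_{c(m+1)}$ has maximum degree strictly less than $d$, so it can be used as a building block inside a $d$-regular graph; moreover, on the exact boundary $k=\frac{c-1}{c}d+1$ one necessarily has $c\mid d$ and $c(m+1)=d$, so the extremal gadget is exactly $K_d$. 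Since a $k'$-partial $c$-coloring is in particular a $k$-partial $c$-coloring for every $k'\ge k$, it suffices to establish hardness for the smallest admissible $k$ and invoke this monotonicity for the rest of the stated range.

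The construction I would use takes a $c$-regular tree $H$ (the hard instance) and builds a $d$-regular graph $G_H$ by replacing each vertex $v$ of $H$ with a gadget built around a copy of $K_{c(m+1)}$, wiring the gadgets of $H$-adjacent vertices together through the $d-c(m+1)+1$ leftover half-edges per clique-vertex, and finally completing $d$-regularity and colouring the remaining free vertices with a Linial-style distance colouring in $O(\log^* n)$ rounds \cite{Linial1992}. The key structural observation is: in \emph{any} $k$-partial $c$-coloring $f$ of $G_H$, each clique $K_{c(m+1)}$ must be \emph{balanced}, using exactly $m+1$ vertices of each colour. Indeed, a clique-vertex of colour $\gamma$ has at most $m$ monochromatic neighbours in total, hence at most $m$ inside its clique, so colour $\gamma$ occurs at most $m+1$ times in that clique; summing over the $c$ colours and using $|K_{c(m+1)}|=c(m+1)$ forces equality. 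Consequently every clique-vertex already spends its entire budget of $m$ monochromatic edges inside its own clique, so \emph{every inter-gadget edge of $G_H$ is bichromatic}, i.e.\ $f$ restricted to the inter-gadget edges is a proper $c$-coloring of the inter-gadget graph.

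From here one extracts a proper $c$-coloring of (a blow-up of) $H$ in $O(1)$ additional rounds, and the whole reduction runs in $O(\log^* n)$ rounds on $H$: each node of $H$ locally generates its constant-size gadget, simulates the hypothetical $k$-partial $c$-coloring algorithm on the constant-radius expansion of $G_H$ (lying about $n$ in the usual way, cf.\ the simulation behind Theorem~\ref{thm:lower_bound}), and reads off its output colour. Hence an $O(\log^* n)$-round algorithm for $k$-partial $c$-coloring in $d$-regular graphs would yield one for proper $c$-coloring of $c$-regular trees, a contradiction; so by the gap theorem \cite{chang16exponential} the problem is ``hard,'' i.e.\ requires $\Omega(\log n)$ deterministic and $\Omega(\log\log n)$ randomized rounds.

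The step I expect to be the real obstacle is the wiring of the inter-gadget edges so that the forced-proper inter-gadget colouring genuinely encodes a proper $c$-coloring of a \emph{hard} instance. A plain clique $K_{c(m+1)}$ is balanced and therefore carries no single ``output colour,'' and the inter-gadget graph produced by the naive wiring has maximum degree $d-c(m+1)+1$, which can be smaller than $c$ (so that $c$-colouring it is trivial), especially when $d$ is close to the threshold. To make the reduction go through for all $d$ the gadget likely has to be refined — for instance, tuned so that in any valid colouring it is forced into the pattern ``balanced except for one deficient colour,'' letting that deficient colour serve as the gadget's output — and the leftover-half-edge count must be reconciled with $c$ and with $\deg_H$ for every residue of $d$ modulo $c$ (possibly reducing from proper $c$-coloring of $\delta$-regular trees for a well-chosen $\delta\ge c$ to gain slack in this bookkeeping). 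Getting this gadget design and counting exactly right is the crux; the ``balanced clique / bichromatic boundary'' mechanism, the monotonicity in $k$, and the $O(\log^* n)$ simulation are routine.
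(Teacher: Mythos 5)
Your reduction is incomplete, and you say so yourself: the ``crux'' you defer --- designing a gadget whose forced colouring pattern actually encodes an output colour, and reconciling the leftover half-edge counts with the degree of the hard instance --- is exactly the part that is missing, and as sketched the construction would fail. Your own balance lemma is the problem: since every copy of $K_{c(m+1)}$ is forced to use each colour exactly $m+1$ times, the gadget is colour-symmetric and carries no distinguished output colour, so the properly coloured inter-gadget edges do not induce any $c$-colouring of $H$ in an extractable way. Worse, the inter-gadget graph has maximum degree $d-c(m+1)+1$, which is $1$ at the extremal parameters $c(m+1)=d$, so even if one could read off a proper colouring of it, that would carry no hardness. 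Until a concrete replacement gadget is exhibited and verified for every residue of $d$ modulo $c$, there is no proof.

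The theorem has a much shorter argument that runs in the opposite direction: instead of reducing \emph{from} a hard problem via gadgets, reduce \emph{to} a hard problem on the same graph. If $k$-partial $c$-colouring were solvable in $O(\log^* n)$ rounds, then each colour class induces a subgraph of maximum degree $x=d-k\le \frac{d}{c}-1$, which can be greedily properly coloured with $x+1\le \frac{d}{c}$ colours in $O(\log^* n)$ further rounds; the pair (partial colour, sub-colour) is then a proper colouring of the original $d$-regular graph with at most $c\cdot\frac{d}{c}=d$ colours, contradicting the known hardness of proper $d$-colouring of $d$-regular graphs \cite{Brandt2016,chang16exponential}. This uses your inequality $d-k\le\frac{d}{c}-1$ in the same way but requires no gadget, no wiring, and no simulation argument. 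Your balanced-clique observation is correct and potentially useful elsewhere, but here it leads into exactly the combinatorial thicket that the forward reduction avoids.
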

\begin{proof}
Assume the problem is easy to solve. Each monochromatic subgraph has a maximum degree $x = d - k \le \frac{d}{c} - 1$, and hence it is easy to color with $x + 1 \le \frac{d}{c}$ colors. Hence overall we can easily find a proper coloring of a $d$-regular graph with at most $c \cdot \frac{d}{c} = d$ colors, but this is known to be hard \cite{Brandt2016,chang16exponential}.
\end{proof}

\section*{Acknowledgments}

This work was supported in part by the Academy of Finland, Grants 285721 and 314888.

{
    \urlstyle{sf}
    \DeclareUrlCommand{\Doi}{\urlstyle{same}}
    \renewcommand{\doi}[1]{\href{http://dx.doi.org/#1}{\footnotesize\sf doi:\Doi{#1}}}
    \bibliographystyle{plainnat}
    \bibliography{partial-coloring}
}

\end{document}